\newtheorem{example}{Example}
\newtheorem{lemma}{Lemma}
\newtheorem{definition}{Definition}
\newtheorem{assumption}{Assumption}
\title{A Koopman Operator Approach for Computing and Balancing Gramians for Discrete Time Nonlinear Systems}
\author{Enoch Yeung, Zhiyuan Liu, and Nathan O. Hodas}
\begin{document}
\maketitle
\begin{abstract}
In this paper,  we consider the problem of quantifying controllability and observability of a nonlinear discrete time dynamical system.  We introduce the Koopman operator as a canonical representation of the system and apply a lifting technique to compute gramians in the space of full-state observables.  We illustrate the properties of these gramians and identify several relationships with canonical results on local controllability and observability.  Once defined, we show that these gramians can be balanced through a change of coordinates on the observables space, which in turn allows for direct application of balanced truncation.  Throughout the paper, we highlight the aspects of our approach with an example nonlinear system. 
\end{abstract}

\section{Introduction}
The ability to quantify {\it how} controllable and {\it how} observable a system is a hallmark of successful engineering.   It is not just enough to know whether a system is controllable \cite{Rugh1996,Perko2013}.  The extent to which it can be controlled , the amount of energy required to achieve a control setpoint \cite{Paganini2013}, and the fundamental modes \cite{Rowley2009,Sandberg2009} required to achieve a certain input-output profile all motivate the development of measures for controllability and observability \cite{Zhou1998}.  



There are two approaches to quantifying controllability and observability of a dynamical system.  First, by checking rank conditions, one can make a binary decision as to whether the system is controllable or observable \cite{Zhou1998}.  This works in theory, but is difficult to implement in practice since checking rank of matrices, let alone distributions, is a numerically challenging problem. 

The second approach is to quantify controllability and observability by examining the gramians of a system.  For linear time invariant systems, there is a rich theory for the construction, efficient computation, and analysis of gramians \cite{Zhou1998},\cite{Paganini2013}, \cite{Rugh1996}.  In particular, the balancing of gramians enables model reduction \cite{Scherpen1993}, by which lower-order yet high fidelity input-output models of the system are constructed.  For nonlinear systems, there is no canonical definition of controllability and observability gramians, nor generalized methods for balanced truncation \cite{Hahn2002}.  Scherpen, Kawano, and Fujimoto et al. have pioneered the development of nonlinear model reduction methods using differential balancing \cite{Fujimoto2005, Kawano2016, Scherpen1993}.    Lall et al demonstrated the use and success of empirical gramians \cite{Lall1999}, where the underlying system was represented empirically via data.  Condon and Ivanov proposed a novel construction of empirical controllability and observability gramians, based on a generalization for solving linear time-varying systems \cite{Condon2004}.  Indeed, the core feature of a grammian is that they are constructed assuming linearity of a system state or output with respect to their initial condition or input.   

Recently, researchers working in Koopman operator theory have shown it is possible to identify and learn the fundamental modes for a nonlinear dynamical system from data.  The key insight here is that a nonlinear dynamical system has a canonical representation as a infinite-dimensional linear system \cite{Rowley2015}.  Rowley, Mezic, et al. showed it was possible to identify the fundamental modes of complex turbulent flow \cite{Rowley2009}, while Kutz et al. showed it was possible to extend Koopman representations for input-control applications \cite{Proctor2016}.   Identifying a Koopman operators from data has  become computationally tractable, largely due to advances in extended dynamic mode decomposition theory \cite{Mezic2004,Mezic2005,Rowley2015} and increased computing power.

Koopman operators can be used to characterize observability of a system.  Surana and Banaszuk used Koopman operators to synthesize observers for state estimation in discrete-time nonlinear systems \cite{Surana2016}.  In a similar vein, Korda and Mezic use Kooman operators to synthesize linear predictors in the context of model predictive control \cite{Korda2016}.  In a complementary paper, Vaidya used the Perron-Frobenius operator, the adjoint of the Koopman operator, to define and quantify the degree of observability of sets in the phase space of a discrete-time nonlinear system \cite{Vaidya2007}. 

In this work we use Koopman operators to construct controllability and observability gramians for a class of discrete time nonlinear systems with exogenous inputs.   
Section \ref{sec:OGramians} introduces the notion of a Koopman observability gramian and Section \ref{sec:CGramians}) introduces the notion of a Koopman controllability gramian. Throughout, we identify several key relationships between traditional definitions of local controllability and observability and non-singularity of Koopman gramians.  We then show how gramians can be balanced through a change of coordinates on the nonlinear observable space, enabling extension of classical balanced truncation methods (Section \ref{sec:modelreduction}).   We illustrate each of these concepts with an example system. 

\section{Koopman Operators: Formulation and Learning Approaches}
\subsection{Formulation of Koopman Operators}
Consider a discrete time open-loop nonlinear system of the form 
\begin{equation}\label{eq:nonlinear_openloop_system}
\begin{aligned}
x_{t+1} = f(x_t) \\
y_t = h(x_t)
\end{aligned}
\end{equation}
where $f \in \mathbb{R}^n$ is $C^{N_d}\left[0,\infty\right)$ differentiable and $h \in \mathbb{R}^p$ is continuously differentiable.   The Koopman operator of system (\ref{eq:nonlinear_openloop_system}), if it exists, is a linear operator that acts on observable functions $\psi(x_k)$ and forward propagates them in time.  We denote the Koopman operator as ${\cal K}: {\cal F} \rightarrow {\cal F}$ where ${\cal F}$ is the space of observable functions that is invariant under the action of ${\cal K}$.  

\begin{lemma}
If $f \in \mathbb{R}^n$ is $C^{N_d}\left[0,\infty\right)$, where $N_d = \infty$, or if $f$ has a finite Taylor series expansion, then a Koopman operator ${\cal K}$ exists for system (\ref{eq:nonlinear_openloop_system}) and can be represented with a matrix of countable dimension. 
\end{lemma}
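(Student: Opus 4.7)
The plan is to exhibit an explicit countable basis of the observable space that is invariant under $\mathcal{K}$, and to read off the matrix representation from the action of $\mathcal{K}$ on this basis. The natural candidate is the monomial basis $\{\psi_\alpha(x) = x^\alpha : \alpha \in \mathbb{N}^n\}$, indexed by multi-indices $\alpha = (\alpha_1,\ldots,\alpha_n)$. This set is countable, and by construction it spans (at least formally) the space of analytic functions on a neighborhood of the origin.

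First I would recall the definition $(\mathcal{K}\psi)(x) = \psi(f(x))$ and apply it to a monomial: $(\mathcal{K}\psi_\alpha)(x) = f(x)^\alpha = f_1(x)^{\alpha_1}\cdots f_n(x)^{\alpha_n}$. Under either hypothesis on $f$ (infinitely differentiable with a convergent Taylor series, or polynomial with a finite Taylor series), each component $f_i(x)$ admits a power series representation $f_i(x) = \sum_{\beta} c_{i,\beta}\, x^\beta$. Substituting these expansions into $f(x)^\alpha$ and multiplying out yields another power series $\sum_{\beta} K_{\beta,\alpha}\, x^\beta = \sum_\beta K_{\beta,\alpha}\,\psi_\beta(x)$, where the coefficients $K_{\beta,\alpha}$ are obtained by the multinomial/Cauchy product formula applied to the Taylor coefficients of $f$. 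This shows that the span of $\{\psi_\alpha\}$ is invariant under $\mathcal{K}$, and the numbers $K_{\beta,\alpha}$ form the entries of a (countably infinite) matrix representing $\mathcal{K}$ in this basis.

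In the finite Taylor series case, each $f_i$ is a polynomial, so each column of the matrix has only finitely many nonzero entries and no convergence issue arises. In the $C^\infty$ case with a globally (or locally) convergent Taylor series, one restricts attention to a neighborhood of a chosen expansion point on which the series for $f(x)^\alpha$ converges; this defines $\mathcal{F}$ as the space of analytic functions on that neighborhood. In either case, indexing rows and columns by the countable set of multi-indices $\mathbb{N}^n$ produces the promised countable-dimensional matrix.

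The main obstacle is not algebraic but analytic: one must justify that the formal manipulations producing $K_{\beta,\alpha}$ actually define a bona fide operator on a well-defined function space $\mathcal{F}$. I would address this by taking $\mathcal{F}$ to be the space of formal power series (or analytic functions on a suitable domain of convergence), noting that in either the polynomial or analytic case closure under composition with $f$ is standard. Once that choice of $\mathcal{F}$ is fixed, the countability of $\mathbb{N}^n$ immediately gives the claimed countable-dimensional matrix representation, completing the argument.
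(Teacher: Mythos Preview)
Your proposal is correct and follows essentially the same route as the paper: both arguments take the monomial dictionary $\{x^\alpha\}_{\alpha\in\mathbb{N}^n}$, expand $f(x)^\alpha$ via Taylor's theorem, and read off the countable matrix from the resulting coefficients. The only cosmetic difference is that the paper organizes the computation by first isolating the state block $x_{t+1}=K_x\psi(x_t)$ and then the lifted block $\varphi(x_{t+1})=K_\varphi\psi(x_t)$ (anticipating its later state-inclusive Assumption~3), whereas you treat all monomials uniformly; you are also more careful than the paper in flagging that $C^\infty$ alone does not guarantee a convergent Taylor series and in specifying $\mathcal{F}$ accordingly.
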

\begin{proof}
The proof follows immediately from Taylor's theorem.  Suppose $f(x)$ is infinitely differentiable, then it can be expressed as an linear combination of powers of $x$, which define the dictionary of observable functions $\varphi(x)$.  The Taylor coefficients become entries in the Koopman operator, to obtain 
\[f(x) =  f(0) + \frac{\partial f}{\partial x} x + x^T\frac{\partial^2 f} {\partial x^2} x + ....  = K_x \psi(x)\] where
$\psi(x) = (x, \varphi(x))$ and $\varphi(x) = (1, x_1x_2, x_1x_3,...).$     Since the expansion is infinite, the Taylor series expansion exactly equals $f(x)$ for any $x.$  This implies that 
\begin{equation}
\psi(x_{t+1})  = \begin{bmatrix} x_{t+1} \\ \varphi(x_{t+1}) \end{bmatrix}  = \begin{bmatrix} K_x \psi(x_t) \\ \varphi(x_{t+1})  \end{bmatrix} 
\end{equation}     
Notice that any element $\varphi(x_{t+1})$ can be expressed as
\begin{equation}
\prod_{j=1}^n x^{p_j}_{j,k+1} = f_j^{p_j}(x_k) = (e_j^T K_x \varphi(x_k)  )^{p_j} =  \sum_{i} c_i \prod_l x_l^{p_l}
\end{equation}  
That is, the product of polynomial powers of entries of $\varphi(x)$ can be expressed in terms of powers of entries of $x_k$, which ultimately are the entries that comprise $\psi(x).$  This means that $\varphi(x_{k+1}) = K_\varphi \psi(x_k)$ for some matrix $K_\varphi \in \mathbb{R}^{(n_L-n) \times n_L}, n_L \leq \infty.$  Since each series expansion is a countable expansion, this means that the  matrices $K_x$ and $K_\varphi$  have countable dimension and that the countable matrix \[  {\cal K} = \begin{bmatrix} K_x \\ K_\varphi \end{bmatrix} \] is a Koopman operator for system (\ref{eq:nonlinear_openloop_system}), i.e.
\begin{equation}
\psi(x_{t+1}) = {\cal K} \psi(x_t)
\end{equation}
When $f$ has a finite Taylor series expansion, the argument is identical, except that the expansions are countably finite and the corresponding Koopman operator is countably finite.   
\end{proof}
This lemma thus outlines conditions that guarantees existence of a non-trival Koopman operator.  In general, the observable $\psi(x)  \equiv 0 $ always yields a Koopman operator, but it is the trival Koopman operator ${\cal K} \equiv 0$.  Another condition that guarantees existence is that the system $f$ is Hamiltonian \cite{Koopman1931}.   We will be considering $C^\infty\left[0,\infty\right)$ nonlinear systems for which countable Koopman operators are guaranteed to exist.
\begin{assumption}\label{assump:countableKoopman}
We suppose that $f \in \mathbb{R}^n$ is $C^{N_d}\left[0,\infty\right)$, where $N_d = \infty$.
\end{assumption}
\begin{assumption}
Given system (\ref{eq:nonlinear_openloop_system}), we suppose that $y_k = h(x_k) \in {\cal F}$ and that $h \in \text{span}\{\psi_1,\psi_2,...\}.$
\end{assumption}
This means that the output $y_k$ can be expressed as 
\begin{equation}
y_k = h(x_t) = W_h \psi(x_t)
\end{equation}
where $W_h \in \mathbb{R}^{p \times n_{L}}$  and $n_L \leq \infty.$
\begin{assumption}
We suppose that the Koopman observable function is {\it state inclusive}, i.e.  \[\psi(x) = (x,\varphi(x))\] where $\varphi(x) \in \mathbb{R}^{n_L-n}$ are continuous functions in ${\cal F}.$
\end{assumption}

\subsection{Koopman Learning: Dynamic and Extended Dynamic Mode Decomposition}
The functions in the Koopman observable $\psi(x)$ are not unique.  For example, if $\psi(x) = (\sin(x),\cos(x))$ was the observable function for a nonlinear system with dynamics 
\begin{equation}
x_{t+1} =  \sin(x_t) + \cos(x_t) 
\end{equation}
then an infinite Taylor series expansion of $\sin(x)$ and $\cos(x)$ shows that $\psi(x) = (1, x, x^2/2!, x^3/3!, ...)$ can also serve as a suitable observable function.  In one case, the observable function is countably finite while in another, the observable function is countably infinite.    Thus, for a single nonlinear system, there are concise and less concise ways of parameterizing a given observable function space.   

The challenge is that the observable functions and the Koopman operator are often unknown, especially in the absence of complete model information or due to known but inherent model complexity. The state-of-the-art method is extended dynamic mode decomposition \cite{Rowley2015}, where generic but expressive basis functions are used to populate a dictionary of observable functions.  The data from the dynamical system is then presented in pairs  
\begin{equation*}
X_p = \begin{bmatrix} x(t_{n+1}) & \hdots & x(t_{0}) \end{bmatrix},\mbox { \hspace{0.1mm} } X_f = \begin{bmatrix} x(t_n) & \hdots & x(t_{1}) \end{bmatrix} 
\end{equation*} 
to obtain 
\begin{equation} \begin{aligned}
\begin{array}{ccc}
\Psi(X_f)  &=  \left[\begin{array}{c|c|c}\psi(x^{(0)}_{n+1})  &  \hdots  &  \psi(x^{(0)}_1) \\ \vdots  &  \ddots  &  \vdots \\ \psi(x^{(p)}_{n+1}) & \hdots & \psi(x^{(p)}_1 )  \end{array} \right]   \\
\Psi(X_p)  & =  \left[\begin{array}{c|c|c}\psi(x^{(0)}_{n})  &  \hdots  &  \psi(x^{(0)}_0) \\ \vdots  &  \ddots  &  \vdots \\ \psi(x^{(N_D)}_{n}) & \hdots & \psi(x^{(N_D)}_0 )  \end{array} \right]. \end{array}
\end{aligned} \end{equation}
and solving for the Koopman operator  by minimizing the (regularized) objective function
\begin{equation}\label{eq:edmd}
|| \Psi(X_f) - K \Psi(X_p) ||_2  + \zeta ||K||_{2,1} 
\end{equation}
where $K$ is the finite approximation to the countable (potentially infinite)  Koopman operator ${\cal K}$.  In the analysis that follows, all definitions, lemmas and theorems are derived considering the exact  Koopman operator ${\cal K}$ of a nonlinear system.   Numerical examples show the use of approximate Koopman operators to estimate Koopman gramians, estimate controllability, estimate observability, and perform model reduction.    The theorems are self-contained as a treatment of the true system Koopman operator ${\cal K}$, while the examples serve to illustrate their application in a data-driven setting.  
\section{Koopman Observability Gramians}\label{sec:OGramians}
The next contribution of this paper is to show how Koopman operators can be used to quantify observability.  To do this requires deriving an expression for how the Koopman operator  maps an initial condition $x_0$ to $y$.   Specifically, Assumption 3 gives us
\begin{equation}
\begin{aligned}
y_t =W_h\psi(x_{t}) &= W_h {\cal K}(\psi(x_{t-1})) \\ &= {\cal K} ({\cal K} (\psi(x_{t-2} ) \\  &=\hspace{5mm}\vdots & \\  &= W_h{\cal K}^{t} \psi(x_0)
\end{aligned}
\end{equation}
We define 
\begin{equation}
\Phi^y_\psi \equiv W_h{\mathcal K}^t
\end{equation}
where $\Phi^y_\psi:\mathbb{R}^{n_L} \rightarrow \mathbb{R}^p$ is the transformation that maps $\psi(x_0)$ to $y_n$. 
The output energy $||y_t||$ can be related to $x_0$ as follows
\begin{equation}
||y_t||^2 = \sum_n <y_t,y_t> = \sum_n \psi(x_0)^T (\Phi^y_\psi)^T \Phi^y_\psi \psi(x_0)
\end{equation}
leading to a natural definition for the Koopman observability gramian.
\begin{definition} 
Given a system (\ref{eq:nonlinear_openloop_system}) satisfying Assumptions 1, 2, and 3, and corresponding Koopman operator ${\cal K} \in \mathbb{R}^{n_L\times n_L}, n_L \in \mathbb{N}, n_L \leq \infty$, the infinite time {\it Koopman observability gramian} is defined as 
\begin{equation}\label{eq:liftedKOgrammian}
X_o^\psi = \sum_n (\Phi^y_\psi)^T\Phi^y_\psi = \sum_{t=0}^\infty  ({\mathcal K}^t)^TW_h^TW_h{\mathcal K}^t.
\end{equation}
\end{definition}
Moreover,  let $P: M_\psi \rightarrow \mathbb{R}^v$  be a non-square linear projection ($v \leq n_L$) of the observable function $\psi(x) = (x,\varphi(x)).$    We define the {\it projected Koopman observability gramian} as
\begin{equation}
X_o(P) \equiv PX_o^\psi P^T
\end{equation}
\begin{lemma}
Given a system (\ref{eq:nonlinear_openloop_system}) satisfying Assumptions 1 and 2 and its corresponding Koopman operator ${\cal K}$, the Koopman observability gramian and the projected Koopman observability gramian are positive semi-definite. 
\end{lemma}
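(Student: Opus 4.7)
The plan is to prove each gramian is positive semi-definite by exhibiting its quadratic form as a sum of squared norms. For an arbitrary vector $v \in \mathbb{R}^{n_L}$, I would compute
\[
v^T X_o^\psi v = \sum_{t=0}^\infty v^T ({\mathcal K}^t)^T W_h^T W_h {\mathcal K}^t v = \sum_{t=0}^\infty \| W_h {\mathcal K}^t v \|_2^2,
\]
which is a sum of non-negative real numbers and hence non-negative. This is the key identity, and it follows directly from regrouping the matrix product into the inner product $\langle W_h {\mathcal K}^t v, W_h {\mathcal K}^t v \rangle$. Since this holds for every $v$, the matrix $X_o^\psi$ is positive semi-definite.

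For the projected gramian $X_o(P) = P X_o^\psi P^T$, I would reduce to the previous case by the standard congruence trick: for any $u \in \mathbb{R}^v$, set $w = P^T u$ and observe
\[
u^T X_o(P) u = u^T P X_o^\psi P^T u = w^T X_o^\psi w \geq 0,
\]
where the last inequality is the conclusion from the first part. Thus $X_o(P)$ is also positive semi-definite, regardless of whether $P$ has full row rank.

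The main technical subtlety, rather than an algebraic obstacle, is the convergence of the infinite series defining $X_o^\psi$ and the fact that ${\mathcal K}$ may be countably infinite-dimensional. I would address this by noting that each partial sum $X_o^{\psi,T} = \sum_{t=0}^{T} ({\mathcal K}^t)^T W_h^T W_h {\mathcal K}^t$ is manifestly positive semi-definite by the same squared-norm argument, and that the partial sums form a monotone non-decreasing sequence in the Loewner order. Hence the pointwise limit $v^T X_o^\psi v = \lim_{T \to \infty} v^T X_o^{\psi,T} v \in [0,\infty]$ exists, and wherever $X_o^\psi$ is finite it inherits PSD from its defining partial sums; where $v$ excites unstable directions, the quadratic form simply equals $+\infty$, which is consistent with semi-definiteness. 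This handles the infinite-dimensional case without requiring any additional stability assumption on ${\mathcal K}$ beyond what is already implicit in the existence of the gramian.
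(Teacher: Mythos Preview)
Your proof is correct and follows essentially the same approach as the paper: both arguments compute the quadratic form and rewrite it as $\sum_t \|W_h \mathcal{K}^t v\|^2 \geq 0$, with the projected case handled either by congruence (your reduction $w = P^T u$) or by repeating the same expansion directly (the paper). Your additional remarks on convergence and the infinite-dimensional case go slightly beyond what the paper states, but the core argument is identical.
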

\begin{proof}
Let $X_o^\psi \in\mathbb{R}^{n_L\times n_L}$, $n_L \leq \infty$ denote the Koopman observability grammian and $X_o^x \in \mathbb{R}^{n\times n}$ denote the Koopman projected observability grammian.  Let $z_\psi \in \mathbb{R}^{n_L}$ and $z_x \in \mathbb{R}^n$.   Then 
\begin{equation}
\begin{aligned}
z_\psi^T X_o^\psi z_\psi =  z_\psi^T    \sum_t ({\mathcal K}^t)^TW_h^TW_h{\mathcal K}^t z_\psi  .
\end{aligned}
\end{equation}
and 
\begin{equation}
\begin{aligned}
z_x^T X_o^x z_x =  z_x^T    P_x \sum_t ({\mathcal K}^t)^TW_h^TW_h{\mathcal K}^t P_x^T z_x .
\end{aligned}
\end{equation}
and defining 
\begin{equation}
\begin{aligned}
\nu_\psi(n) &= W_h{\cal K}^t z_\psi \\
\nu_x(n) &= W_h {\cal K}^t P_x  z_\psi
\end{aligned}
\end{equation}

we have that 
\begin{equation}
\begin{aligned}
z_\psi^T X_o^\psi z_\psi &=  z_\psi^T    \sum_{t=0}^\infty ({\mathcal K}^t)^TW_h^TW_h{\mathcal K}^t z_\psi   \\ 
& =       \sum_t z_\psi^T ({\mathcal K}^t)^TW_h^TW_h{\mathcal K}^t z_\psi   \\ 
& = \sum_t \nu_\psi(t)^T \nu_\psi(t) = \sum_t || \nu_\psi(t)||^2\\
& \geq 0
\end{aligned}
\end{equation}
\begin{equation}
\begin{aligned}
z_x^T X_o^x z_x &=  z_x^T    P_x^T \sum_n ({\mathcal K}^t)^TW_h^TW_h{\mathcal K}^t P_x z_x  \\ 
& =  \sum_t z_x^T    P_x^T  ({\mathcal K}^t)^TW_h^TW_h{\mathcal K}^t P_x z_x \\ 
& = \sum_t \nu_x(t)^T \nu_x(t)  = \sum_t || \nu_x(t)||^2 \\ 
& \geq 0
\end{aligned}
\end{equation}
\end{proof}
The Koopman observability gramian quantifies the observability of the function $\psi(x)$.  More importantly, when $\psi(x)$  includes observable functions related to the local observability of the underlying nonlinear system (\ref{eq:nonlinear_openloop_system}), the Koopman observability gramian retains that information.  The following lemma makes this relationship precise. 
\begin{lemma}
Suppose that system (\ref{eq:nonlinear_openloop_system}), satisfies Assumptions 1-3, and is locally observable and generates the involutive distribution 
\begin{equation}
\Delta(x) = \{ L_f^{d_1} h(x) ,..., L_f^{d_n} h(x) \}.
\end{equation}
of rank $n$.  
Let $\psi_x(x)$ denote the Koopman observable, then if there exists $n$ projections $P_i:\mathbb{R}^{n_L} \rightarrow \mathbb{R}^{v}$, $v =p $, $ i = 1,...,n$ such that
\begin{equation}
L_f^{d_i} h(x)   =P_i\psi_x(x) 
\end{equation} 
for all $x$, then there exists projected Koopman observability grammian $X_o(\bar{P})$ with ${\bar P} : \mathbb{R}^{n_L} \rightarrow \mathbb{R}^n$ that is positive definite.  \end{lemma}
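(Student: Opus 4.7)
The plan is to construct $\bar{P}$ from the hypothesized projections, lower-bound $X_o(\bar{P})$ by a finite-sum surrogate involving only the Lie-derivative time indices, and then promote the local observability rank condition into invertibility of that surrogate.

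First, I row-stack the projections $P_i$ into a single map $\bar{P}:\mathbb{R}^{n_L}\to\mathbb{R}^n$ (selecting a scalar row per $i$ when $p>1$), so that by hypothesis $\bar{P}\psi_x(x) = (L_f^{d_1}h(x),\ldots,L_f^{d_n}h(x))^T$. The preceding lemma already gives $X_o(\bar{P})\succeq 0$, so only triviality of its kernel remains. Because every summand of $X_o^\psi$ is positive semi-definite, discarding all but $t\in\{d_1,\ldots,d_n\}$ yields
\begin{equation*}
X_o(\bar{P}) \succeq \sum_{i=1}^n (W_h{\mathcal K}^{d_i}\bar{P}^T)^T(W_h{\mathcal K}^{d_i}\bar{P}^T) = M^T M,
\end{equation*}
where the $i$-th row of $M\in\mathbb{R}^{n\times n}$ is $W_h{\mathcal K}^{d_i}\bar{P}^T$. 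It therefore suffices to prove $M$ is invertible.

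To argue invertibility of $M$, I differentiate the hypothesis $P_i\psi(x)=W_h{\mathcal K}^{d_i}\psi(x)$ at $x=0$ to obtain $P_iD\psi(0)=W_h{\mathcal K}^{d_i}D\psi(0)$, and stack to $\bar{P}D\psi(0)=[D(L_f^{d_i}h)(0)]_i$, which by local observability has rank $n$. Since $\bar{P}D\psi(0)\in\mathbb{R}^{n\times n}$ has full rank, so must $\bar{P}$ (as an $n\times n_L$ matrix), and hence the Gram matrix $\bar{P}\bar{P}^T$ is positive definite. Upgrading the functional hypothesis to the pointwise matrix identity $W_h{\mathcal K}^{d_i}=P_i$, one then computes $M = \bar{P}\bar{P}^T \succ 0$, so $M^T M \succ 0$ and $X_o(\bar{P}) \succ 0$.

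The main obstacle is exactly this upgrade from functional to matrix identity: the hypothesis only gives $P_i v = W_h{\mathcal K}^{d_i} v$ when $v$ lies in the image of $\psi$, whereas the argument above applies it to the columns of $\bar{P}^T$, which are not a priori of that form. To bridge this I would either invoke richness of the dictionary (so that the linear span of $\{\psi(x):x\}$ is all of $\mathbb{R}^{n_L}$, as is standard in EDMD with full polynomial dictionaries), or use Assumption~3 together with a duality argument that restricts the analysis to the observable subspace spanned by $D\psi(0)$. I expect this bridging step to be the most technically delicate portion of the proof.
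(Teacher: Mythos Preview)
Your approach is genuinely different from the paper's, and the gap you flag is real.

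\textbf{What the paper does instead.} The paper does not stack the $P_i$'s. It builds $\bar{P}$ directly from the spectral decomposition ${\mathcal K}^t = V^{-1}\Lambda^t V$: with $L(x) = [\,L_f^{d_1}h(x)\;\cdots\;L_f^{d_n}h(x)\,]$ it sets
\[
\bar{P} = \begin{bmatrix} L(x)^T & {\bf 0}_{n\times(n_L-p)} \end{bmatrix} (V^T)^{-1},
\]
so that $\Phi_\psi^y\bar{P}^T = W_h\sum_t V^{-1}\Lambda^t\begin{bmatrix}L(x)\\ {\bf 0}\end{bmatrix}$ visibly inherits full column rank from $L(x)$. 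The point is that by passing to eigenvector coordinates the paper never needs the identity $W_h{\mathcal K}^{d_i}=P_i$ as matrices; full rank of $L(x)$ is fed in directly. This is precisely the step you identified as the obstacle in your route and then tried to recover via a richness-of-dictionary argument.

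\textbf{On your gap.} Your ``upgrade'' from $P_i\psi(x)=W_h{\mathcal K}^{d_i}\psi(x)$ for all $x$ to $P_i=W_h{\mathcal K}^{d_i}$ as $p\times n_L$ matrices is exactly the assumption that $\operatorname{span}\{\psi(x):x\}=\mathbb{R}^{n_L}$. For a polynomial dictionary this is fine and easy to state, but it is an extra hypothesis not present in the lemma, so your proof as written is conditional. The duality alternative you sketch (restricting to $\operatorname{im}D\psi(0)$) does not close the gap: $\bar{P}^T$ need not have columns in that subspace either.

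\textbf{A dimension issue.} When $p>1$, $W_h{\mathcal K}^{d_i}\bar{P}^T$ is a $p\times n$ block, not a row, so $M$ is $(np)\times n$ and the line $M=\bar{P}\bar{P}^T$ no longer type-checks. Your ``select one scalar row from each $P_i$'' patch for building $\bar{P}$ does not automatically propagate to $M$, since $W_h$ still has $p$ rows. Either restrict to $p=1$ or carry block rows throughout and argue full column rank of the stacked $(np)\times n$ matrix.

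In short: your finite-sum lower bound is a reasonable alternative strategy, but to make it unconditional you either need to assume the dictionary spans $\mathbb{R}^{n_L}$, or adopt the paper's eigenvector construction of $\bar{P}$, which sidesteps the issue entirely.
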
 
\begin{proof}
Since the system (\ref{eq:nonlinear_openloop_system}) is locally observable, the matrix 
\begin{equation}
L(x) \equiv  \begin{bmatrix} L_f^{d_1} h(x) &\hdots & L_f^{d_n} h(x)\end{bmatrix}
\end{equation} 
has full column rank $n$ for all $x$.   In particular, define 
\begin{equation}
\bar{P}  =\begin{bmatrix} L(x)^T & {\bf 0}_{n \times ( n_L - p)}\end{bmatrix}  (V^T )^{-1}
\end{equation}
where $V$ is the matrix of left eigenvectors of the Koopman operator, i.e.
\begin{equation}
{\cal K}^t  = V^{-1} \Lambda^t V. 
\end{equation} 
Note that $\bar{P}$ is well defined.  To see this, note that $V$ spans the observable function space \cite{Rowley2015} and therefore since $L(x)$ is in the range of $\psi_x(x)$, $L(x)$ is also in the range space of $V$.   
Let $x$ be an arbitrary point in $\mathbb{R}^n.$   The projected Koopman observability gramian 
\begin{equation}
\bar{P} X_o^\psi \bar{P}^T  = (\bar{P} (\Phi_\psi^y)^T) (\Phi_\psi^y \bar{P}^T)
\end{equation} is positive definite if and only if 
$\Phi_\psi^yP^T$ has no right null-space. 
By definition, we now have 
\begin{equation}
\begin{aligned}
\Phi_\psi^y \bar{P}^T &= W_h  \sum_t  V^{-1} {\Lambda}^t V \bar{P}^T  \\ 
  & = W_h \sum_t V^{-1} {\Lambda}^t \begin{bmatrix} L(x) \\ {\bf 0}_{(n_L-p) \times n} \end{bmatrix}
  \end{aligned}
\end{equation}
but since $L(x)$ has full column rank, this implies that $\Phi_\psi^y \bar{P}^T$ has full column rank and therefore $X_o(\bar{P})$ is positive definite.

\end{proof}

\begin{example}[Two State System]
An advantage of Koopman observability gramians is their ability to quantify observability even for systems that are nonlinear, unstable, or that have eigenvalues close to zero.  We consider such a simple 2-dimensional nonlinear system as an example, namely a nonlinear system that exhibits linear behavior for some initial conditions, but with a slight perturbation to the initial condition $x[0]$, becomes unstable 
\begin{equation}
\begin{aligned}
x_1[t] &=\delta_1 x_1[t-1] + \alpha x_1[t-1]^2 - x_2[t-1]^2 \\
x_2[t] &= \delta_2 x_2[t-1] + \beta x_1[t-1] + \gamma x_2[t-1]^2\\
y_1[t] &= x_1[t]^2  \\ 
y_2[t] & = x_2[t]^2
\end{aligned}
\end{equation}
For our simulations, we have taken $\delta_1 = 0.75, \delta_2 = 0.9, \alpha = 0.02, \beta = 0.12,$ and $\gamma = 0.1$ The state trajectories of both systems are plotted in Figure \ref{fig:strong_oscillator}.   The true Koopman observability grammian is a function of the true Koopman operator.  Following the approach described in Section II, we construct an approximate Koopman operator using the extended dynamic mode decomposition method and approximate the Koopman observability grammian.  We define a vector observable function
\begin{equation}
\begin{aligned}
\psi(x) &= \left(p(x), x_1x_2^2 , x_1^2 x_2 , x_1^2 x_2^2, {\cal L}_f h(x), {\cal L}^2_f h(x)   \right)\\
p(x)  &= \left(x_1,x_2, x_1^2,x_2^2\right).
\end{aligned}
\end{equation}
The vector $\psi(x)$ is an observables vector that contains the full-state of the system (for calculating the Grammian) as well as higher order polynomial terms.   We compute a finite approximation $K \in\mathbb{R}^{12 \times 12}$ of ${\cal K}$.  For brevity, we refrain from displaying it here.  We construct 
\begin{equation}
W_h = \begin{bmatrix} {\bf e}_3^T \\ {\bf e}_4^T\end{bmatrix}
\end{equation}
where ${\bf e}_j \in \mathbb{R}^{12}.$  The matrix $P_x \in\mathbb{R}^{12 \times 2}$ is defined as 
\begin{equation}
\left[\begin{array} {c |c} {\bf e}_1 & {\bf e}_2 \end{array}\right]
\end{equation} Our approximation to the Koopman observability grammian is calculated as in equation (\ref{eq:liftedKOgrammian}) and its (normalized) projection is given as 
\begin{equation}
\begin{aligned}
X_o(P_x) &= P_x^T (\Phi^y_\psi)^T  (\Phi^y_\psi) P_x  \\ &=
\left(\begin{array}{cc} 0.69 & -0.31\\ -0.31 & 0.14 \end{array}\right),
\end{aligned}
\end{equation}
 where $X_o(P_x)$ is computed as the 1-step Koopman observability grammian.  Recall that the canonical observability grammian quantifies the output energy associated with a particular initial condition $x_0.$  
\begin{figure}[]\label{fig:strong_oscillator}
\centering
\includegraphics[width=\columnwidth]{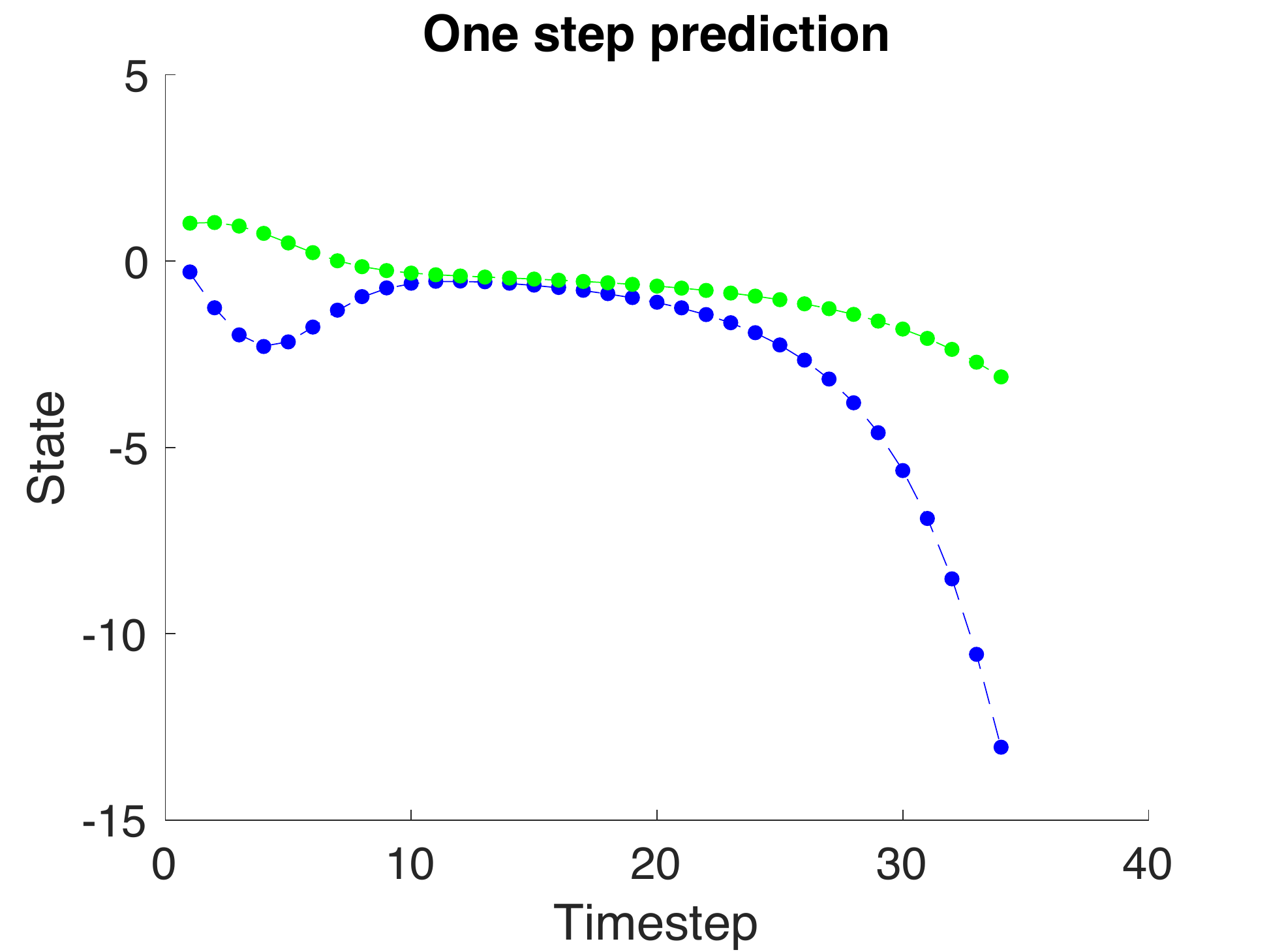}
\caption{A plot of the predicted state trajectories from a trained Koopman operator versus actual state trajectories for the damped oscillator system.  There are two states in this system, $x_1$, plotted in blue, and $x_2$ plotted in green. The simulation results for the ground-truth data are plotted as dots while the one-step prediction from the Koopman operator is plotted as dashed lines.  The 2-norm error summed across both output channels was $\epsilon = 2.7 \times 10^{-5}$.  For stable simulations (data not shown), $\epsilon$ was much smaller. }
\end{figure}

In our simulations of this system, we found that $x_1[n]$ (blue) exhibited a much larger output energy $y_1[n] = x_1[n]^2$ across a range of initial conditions, including the one plotted in Figure \ref{fig:strong_oscillator}.   We see that the approximate projected  Koopman observability grammian reflects this increase in output energy, while computing the linearized system 
\begin{equation}
\begin{aligned}
(x_t-x_0)& = \frac{\partial F}{\partial x}\mid_{x=x_0} (x_t-x_0)\\
y_t&= \frac{\partial h}{\partial x}\mid_{x=x_0} (x_t-x_0)
\end{aligned}
\end{equation}
results in the canonical linear grammian 
\begin{equation}
X_o = \left(\begin{array}{cc} 1.2 & -2.1\\ -2.1 & 10.0 \end{array}\right)
\end{equation}
\end{example}
Notice that it mischaracterizes the output energy for $x_1$ and $x_2$.  In particular, it predicts that $x_2[n]$ is ten times more sensitive to perturbation in initial conditions.  Examining the first few points of the trajectory, it seems this is the case, which may explain why the linearization (a local approximation around $x_0$) erroneously suggests that the initial condition of $x_2$ is more observable than $x_1.$      However, we see that $x_1$ is inherently unstable over time, a feature which is poorly captured by this linearization. 
\begin{example}[Linear Systems]
The discrepancy between the Koopman observability grammian and the linear observability grammian raises the question, ``What is the form of the Koopman observability grammian for a linear system?" 
Consider the autonomous linear system
\begin{equation}
\begin{aligned}
x_t&= Ax_{t-1}\\
y_t&= Cx_t.\\
\end{aligned}
\end{equation}
We suppose that $x$ is an element of the full-state observable.  It is straightforward to see that $\psi(x) = x$ alone is a sufficient observable to predict the future state of the system, with the $t$-step Koopman operator given as 
\begin{equation}
{\mathcal K}^t= A^t
\end{equation}
and the matrices $P_x = I$ and $W_h = C$.  The infinite-time Koopman grammian is thus written as 
\begin{equation}\label{eq:cummKoopmangrammian}
\sum_tP_x^T({\mathcal K}^t)^T W_h^TW_h{\mathcal K}^t P_x  = \sum_t(A^t)^T C^T C A^t 
\end{equation}
which is identical to the linear observability grammian.   Even when we extend the observable dictionary to include higher order terms, the Koopman operator weights towards the linear terms.  Thus, the Koopman observability grammian is able to recapitulate the linear concept of observability, but it also has the ability to quantify observability in nonlinear systems.
\end{example}

\section{Generalized Koopman Controllability Gramians}\label{sec:CGramians}
\noindent The next contribution of this paper is the definition and construction of Koopman controllability gramians.  Specifically, we consider discrete time nonlinear systems with control of the form 
\begin{equation}\label{eq:nonlinear_system}
\begin{aligned}
x_{t+1} &= F(x_t,w_t) \\
y_{t} & = h(x_t)
\end{aligned}
\end{equation}
where $w_n \in \mathbb{R}^m$ and $F \in C^\infty[0,\infty).$  Recently, Proctor and Kutz showed that it is possible to compute input Koopman operators \cite{Proctor2016} using extended dynamic mode decomposition with control (DMDc).  With control variables, the input Koopman operator are computed on an {\it input-state} (or input-output) observable $\psi(x_n,w_n) \in \mathbb{R}^{N_L}$ to satisfy the dynamical equation
\begin{equation}\label{eq:generalcontrol}
\begin{aligned}
\psi(x_{t+1},w_{t+1}) & = {\mathcal K} \psi(x_t,w_t)
\end{aligned}
\end{equation}
\begin{assumption}\label{assump:noinputdynamics}
We suppose the inputs $w_t$ of system (\ref{eq:generalcontrol}) can be modeled as an exogenous disturbance without state-space dynamics \cite{Proctor2016}.  Specifically, we suppose that 
\begin{equation}
\begin{aligned}
 \psi(x_{t+1},w_{t+1})& = \psi(x_{t+1},0)   = {\mathcal K} \psi(x_t,w_t)
\end{aligned}
\end{equation}
\end{assumption}
\begin{lemma}
Consider a nonlinear system of the form 
\begin{equation}\label{eq:nonlinear_system}
x_{t+1} = f(x_t,w_t)
\end{equation}
with exogenous disturbances $w_t$ and corresponding Koopman model satisfying Assumption \ref{assump:noinputdynamics},
\begin{equation}\label{eq:exoKoopman}
\begin{aligned}
 \psi (x_{t+1},0) &=   K \psi(x_{t}, w_{t}).  
\end{aligned}
\end{equation}
The same Koopman equation can be written as
\begin{equation} \label{eq:affineKoopman}
\psi_x(x_{t+1}) = K_x \psi_x(x_t) + K_u \psi_u(u_t) 
\end{equation}
where $u_t = u(x_t,w_t)$ is a vector function consisting of univariate terms of $w_t$ and  multivariate polynomial terms consisting of $x_t$ and $w_t.$
\end{lemma}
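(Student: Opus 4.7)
The plan is to exploit the Taylor/polynomial structure guaranteed by an extension of Assumption~\ref{assump:countableKoopman} to the joint variable $(x,w)$, and to split the input-state observable $\psi(x,w)$ into a pure-state block and an input-containing block. Concretely, I would define
\begin{equation*}
\psi(x,w) = \begin{bmatrix} \psi_x(x) \\ \psi_u(x,w) \end{bmatrix},
\end{equation*}
where $\psi_x$ collects every basis monomial that depends on $x$ alone, and $\psi_u$ collects the remaining basis elements, each of which by construction contains at least one positive power of some component of $w$. The key structural consequence is that $\psi_u(x,0) \equiv 0$ for every $x$, since setting $w=0$ annihilates any monomial with a positive power in a $w$-component.

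Next I would partition the Koopman operator conformally as
\begin{equation*}
K = \begin{bmatrix} K_x & K_u \\ K_{21} & K_{22} \end{bmatrix}
\end{equation*}
and apply it to the identity $\psi(x_{t+1},0) = K\psi(x_t,w_t)$ from Assumption~\ref{assump:noinputdynamics}. Using $\psi_u(x_{t+1},0)=0$, the first block row reads
\begin{equation*}
\psi_x(x_{t+1}) = K_x\psi_x(x_t) + K_u\psi_u(x_t,w_t),
\end{equation*}
which is exactly the claimed affine-in-lifted-input form once I identify $u_t := \psi_u(x_t,w_t)$. Because every entry of $\psi_u$ is a monomial with at least one $w$-factor, each component of $u_t$ is either a univariate polynomial in $w_t$ (when the $x$-degree is zero) or a multivariate polynomial in $(x_t,w_t)$, matching the description in the statement.

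The main obstacle I expect is making the decomposition of $\psi$ rigorous rather than merely formal. The paper does not explicitly say the joint observable dictionary is polynomial in $(x,w)$, so I would need to invoke the joint Taylor expansion of $F\in C^\infty$ (as in Lemma~1) to justify working in a polynomial basis in which the split between ``$x$-only'' and ``$w$-containing'' monomials is well-defined. A secondary subtlety is that the dictionary must be closed enough under this split that partitioning $K$ is legitimate; if the chosen $\psi$ fails to separate in this way, the argument requires enlarging the dictionary so that $\psi_u(x,0)=0$ holds by design, after which the block extraction above goes through unchanged.
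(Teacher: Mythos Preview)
Your proposal is correct and follows essentially the same approach as the paper: split the joint observable into a pure-state block and an input-dependent block, partition $K$ conformally, and read off the top block row. The only cosmetic difference is that the paper uses a three-way split $\psi_x,\psi_{xw},\psi_w$ and then invokes a Taylor expansion of $\psi_{xw}$ to exhibit the polynomial structure of $u_t$, whereas you work in a polynomial basis from the outset and merge the mixed and pure-$w$ terms into a single $\psi_u$ block; both routes land on the same affine form.
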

\begin{proof}
Consider the nonlinear system (\ref{eq:nonlinear_system}).  We remark the form of equation (\ref{eq:exoKoopman}) is a special instance of the form derived in \cite{Proctor2016}.  To be precise, the existence of a closed-loop system Koopman operator that satisfies the relation
\begin{equation}
\psi(x_{t+1},w_{t+1}) = K \psi(x_t,w_t) 
\end{equation}
follows from the original Koopman papers \cite{Koopman1931,Koopman1932}.  The entire state-space dynamics of a closed-loop nonlinear system, including both state and input, can be viewed as the state-space dynamics of an autonomous dynamical system which has a Koopman operator.    Moreover, Assumption \ref{assump:noinputdynamics} guarantees that the system can be written in the form 
\begin{equation}
\psi_x(x_{t+1})  = K \psi(x_t,w_t) 
\end{equation}
where $\psi_x(\cdot)$ is a vector consisting of the elements of $\psi(x_t,w_t)$ that only depend on $x_t.$ Due to Assumption \ref{assump:countableKoopman} we know that $K$ is a linear operator that can be represented by a matrix of countable dimension.  Therefore, the right hand side can be partitioned in terms of dependence of Koopman basis functions on $x_t, w_t$ or both $x_t$ and $w_t$:
\begin{equation}
\psi_x(x_{t+1})  = K_x \psi_x(x_t) + K_{xw}\psi_{xw}(x_t,w_t) + K_w\psi_w(w_t) 
\end{equation}
where $\psi_x(x_t)$ represents the elements of $\psi(x_t,w_t)$ that directly depend on $x_t$, $\psi_{xw}(x_t,w_t)$ represents the elements of $\psi(x_t,w_t)$ that depend on a  mixture of $x_t$ and $w_t$ terms, and $\psi_w(w_t)$ represents the elements of $\psi(x_t,w_t)$ that only depends on $w_t.$ 
Now consider the last two terms on the right hand side; we can write an exact expression according to Taylor's theorem (involving infinite expansions)  for each term 
\begin{equation}
\begin{aligned}
\psi_{xw}(x_t,w_t) & = W_{xw} \nu(x_t,w_t) \\ 
\end{aligned}
\end{equation}
where $\nu(x_t,w_t)$ is a vector containing the polynomial basis with elements of the form 
\begin{equation}
x_i^l w_j^k,
\end{equation}
$l, k \in \mathbb{N}$, $x_i$ is an element of the state vector $x$, $i = 1,2 , ...$ and $w_j$ is an element of the disturbance vector $w$, $j = 1, 2, ...$  Similarly, 
$\nu(w_t)$ is a vector containing the polynomial basis with elements of the form 
\begin{equation}
w_i^l w_j^k
\end{equation}
where $i, j = 1,2 , ...$ and $l,k \in \mathbb{N}.$  Define \[u_t =\begin{bmatrix} w_t^T & \nu(x_t,w_t)^T    \end{bmatrix}^T\]
It immediately follows that 
\[K_u = \begin{bmatrix}K_{w} \\ K_{xw} W_{xw} \end{bmatrix}\]
and therefore
\begin{equation}
\psi_x(x_{t+1}) = K_x \psi_x(x_t) + K_u \psi_u(u_t) 
\end{equation}
\end{proof}

As with the observability gramian, we now define the input to state operator as $\Phi_c^\psi:\mathbb{R}^{m_L} \rightarrow \mathbb{R}^{n_L}$  as
\begin{equation}
\Phi_c^\psi \equiv  {\mathcal K}_x^j {\mathcal K}_u .
\end{equation}
\begin{definition}
The lifted Koopman controllability grammian is defined as 
\begin{equation}
X_c^\psi =  \sum_{j=0}^{\infty}\Phi_c^\psi (\Phi_c^\psi)^T =  \sum_{j=0}^{\infty} {\mathcal K}_x^j {\mathcal K}_u {\mathcal K}_u^T ({\mathcal K}_x^j)^T
\end{equation}
while the projected Koopman controllability grammian is defined for a given projection mapping $P:\mathbb{R}^{n_L} \rightarrow \mathbb{R}^{v}, v \in \mathbb{N}, v \leq \infty$ as
\begin{equation}
X_c(P) = P X_c^\psi P^T = P\Phi_c^\psi (\Phi_c^\psi)^T P^T.
\end{equation}
\end{definition}
\begin{lemma}
The Koopman controllability gramian and the projected Koopman controllability gramian are positive semidefinite. 
\end{lemma}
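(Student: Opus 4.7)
The plan is to mirror the structure of the observability gramian proof (Lemma 2) almost verbatim, exploiting the fact that $X_c^\psi$ has been written as a sum of outer products $\Phi_c^\psi (\Phi_c^\psi)^T$. The key observation is that any matrix of the form $MM^T$ is automatically positive semidefinite, and a convergent (or partial) sum of such matrices inherits the property, so all the work reduces to rewriting the quadratic form $z^T X_c^\psi z$ as a sum of squared norms.

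First I would take an arbitrary test vector $z_\psi \in \mathbb{R}^{n_L}$ and expand
\begin{equation*}
z_\psi^T X_c^\psi z_\psi \;=\; \sum_{j=0}^\infty z_\psi^T \mathcal{K}_x^j \mathcal{K}_u \mathcal{K}_u^T (\mathcal{K}_x^j)^T z_\psi.
\end{equation*}
Defining the auxiliary vector $\nu_\psi(j) \equiv \mathcal{K}_u^T (\mathcal{K}_x^j)^T z_\psi$, each summand collapses to $\nu_\psi(j)^T \nu_\psi(j) = \|\nu_\psi(j)\|^2 \geq 0$, so the full sum is non-negative term by term. This gives $X_c^\psi \succeq 0$. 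To keep the analogy with the observability proof tight, I would mimic its three-line display: expand, pull the quadratic form inside the summation, and recognize the squared-norm structure.

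Next I would handle the projected gramian $X_c(P) = P X_c^\psi P^T$. For any $z \in \mathbb{R}^v$, setting $\tilde z_\psi = P^T z$ gives
\begin{equation*}
z^T X_c(P) z \;=\; (P^T z)^T X_c^\psi (P^T z) \;=\; \sum_{j=0}^\infty \|\mathcal{K}_u^T (\mathcal{K}_x^j)^T P^T z\|^2 \;\geq\; 0,
\end{equation*}
which is just the first part applied to the lifted vector $P^T z$. So positive semidefiniteness of the projection follows immediately from positive semidefiniteness of $X_c^\psi$, with no additional rank or regularity hypothesis on $P$.

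I do not expect a genuine obstacle here. The only minor subtlety worth flagging is convergence of the infinite sum defining $X_c^\psi$: strictly speaking, the argument above shows that every finite partial sum $\sum_{j=0}^N \mathcal{K}_x^j \mathcal{K}_u \mathcal{K}_u^T (\mathcal{K}_x^j)^T$ is positive semidefinite, and positive semidefiniteness is preserved under pointwise limits of quadratic forms, so the conclusion extends to the infinite sum whenever it exists. This matches the implicit convergence convention already adopted for the observability gramian in Lemma 2, so no extra assumption beyond those already in force (Assumptions 1--3 together with Assumption \ref{assump:noinputdynamics}) is needed.
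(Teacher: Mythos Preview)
Your proposal is correct and follows exactly the approach indicated in the paper, which simply states that the proof is analogous to the observability case (Lemma~2) by noting the $(X)(X)^T$ structure. You have merely written out explicitly the squared-norm expansion that the paper leaves to the reader, including the projected case via $P^T z$; nothing more is needed.
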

\begin{proof}
The proof is analogous to the proof for positive semidefiniteness of Koopman observability gramians and projected Koopman observability gramians, noting the symmetric $(X)(X)^T$ structure of the gramian.   
\end{proof}
\begin{figure}\label{fig:controlled_oscillator}
\centering
\includegraphics[width= \columnwidth]{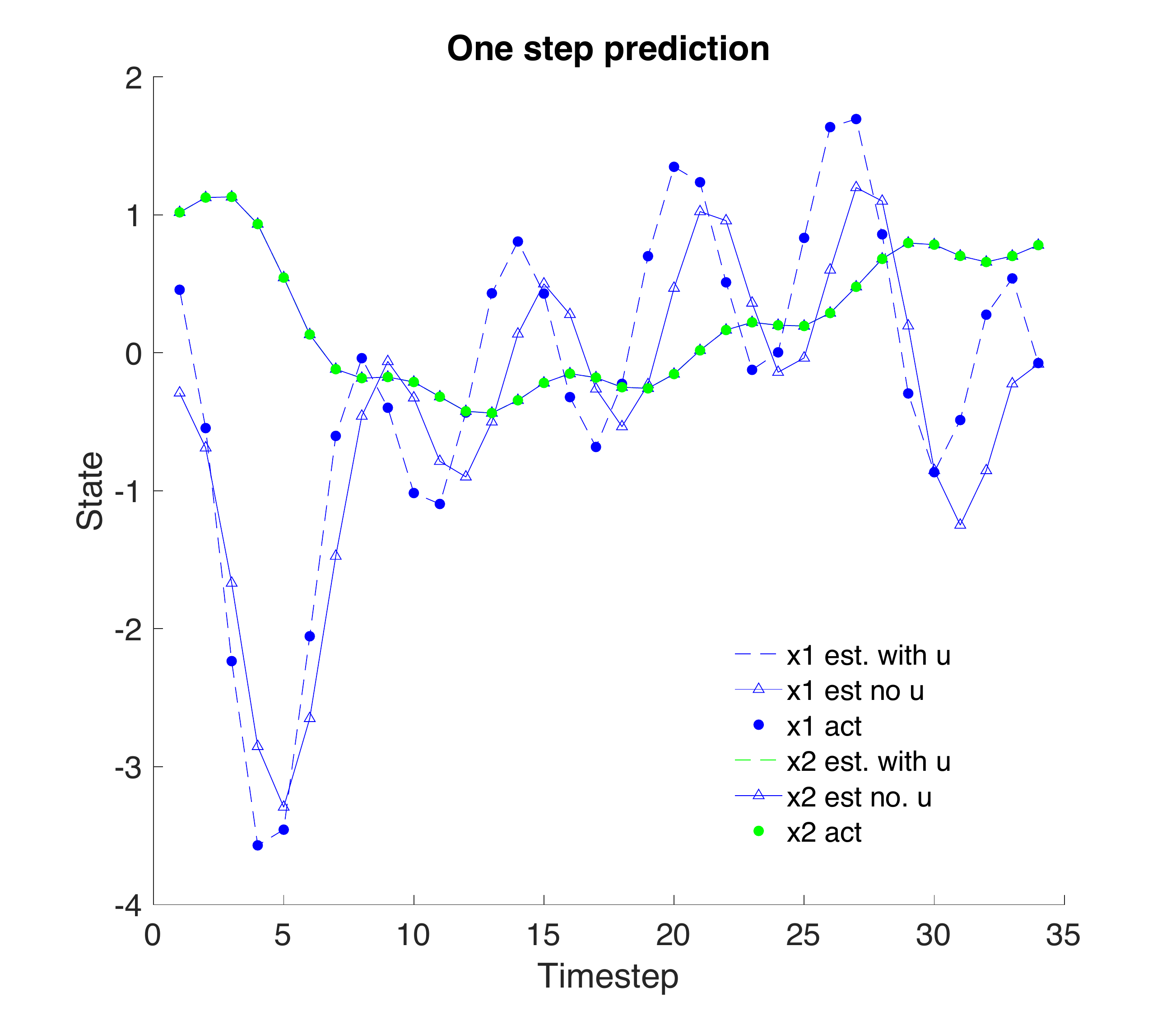}
\caption{A plot of the predicted state trajectories from a trained Koopman operator versus actual state trajectories for a forced oscillating nonlinear system.    The system is parametrically identical to the system represented in Figure \ref{fig:strong_oscillator}, with the exception of an additional control term in $F_1(x).$  There are two states in this system, $x_1$, plotted in blue, and $x_2$ plotted in green. The simulation results for the ground-truth data are plotted as dots while the one-step prediction with control from the Koopman operator is plotted as dashed lines.   The prediction that uses no control input (prediction in the presence of an unknown disturbance) is plotted with a dashed line and triangles. The 2-norm error summed across both output channels was $\epsilon < 10^{-8}$ for the control prediction and $\epsilon = 3.54$  for the open-loop prediction.}
\end{figure}
\subsection{ Koopman Gramians for Control-Affine Systems}
Next, we consider the relationship between classical conditions for local controllability and the Koopman controllability gramian.  Local controllability is not well characterized for arbitrary systems of the form (\ref{eq:nonlinear_system}).  
Instead, we  consider the class of control-affine nonlinear systems where 
\[
F(x,u) =  f(x) + g(x)u.
\]
There are well characterized conditions for controllability, see \cite{Isidori2013} for details.    Specifically, define 
${\cal D}_0 = \text{span}\{f,g\},$  ${\cal D}_i = \left[f,\bar{{\cal D}}_{i-1}\right]$, and $i \geq 1.$
where ${\bar{\cal D}}_i$ is the involutive closure of the distribution ${\cal D}_i.$  It is well known that there exists an involutive distribution $D^*$ with integer $k^*$ such that $\bar{D}_{k*} = \bar{D}_{k*+r} = D^*$ for all $r \geq 0.$ 
Moreover, ${\bar {\cal D}}^*$ satisfies two properties:
\begin{equation}
\begin{aligned}
&\text{P1) } \text{span}\{f,g\} \subset D^* \\ 
&\text{P2) }\left[f_0,D^* \right] \subset D^*.
\end{aligned}
\end{equation}  
Whenever the system is locally controllable, we know that the rank of $D^*(x) = n$ for every $x \in U$ where $U$ is an arbitrary neighborhood of the origin.  
\begin{lemma}
Suppose the discrete time nonlinear system 
\begin{equation}\label{eq:control-affine}
x_{t+1} = F(x_t,u_t) = f(x_t) + g(u_t)
\end{equation} satisfies Assumptions 1-4 and
is locally controllable, i.e. it generates an involutive distribution of rank $n$, expressed as $D^* = \{d_1(x), d_2(x), ... , d_n(x)\}$.   If there exists $n$ projections $P_i:\mathbb{R}^{n_L} \rightarrow \mathbb{R}^v$, $v = n$ such that 
\begin{equation}
 d_i(x) = P_i \psi_x(x)
\end{equation}
Then there exists a projection $\bar{P}:\mathbb{R}^{n_L} \rightarrow \mathbb{R}^{v}$  such that $X_c(\bar{P}) > 0.$  
\end{lemma}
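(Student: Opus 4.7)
My plan is to mirror the proof of Lemma 3 (the observability analogue), exploiting the duality between reachability and observability in the Koopman lift: the role that the Lie-derivative matrix $L(x)$ played there is played here by the Lie-bracket matrix $D(x) = [d_1(x) \mid \cdots \mid d_n(x)] \in \mathbb{R}^{n \times n}$, which is invertible because $D^*$ has rank $n$.

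First, I would factor the projected gramian as
\[
X_c(\bar P) \;=\; \sum_{j=0}^{\infty} \bigl(\bar P \mathcal{K}_x^j \mathcal{K}_u\bigr)\bigl(\bar P \mathcal{K}_x^j \mathcal{K}_u\bigr)^T,
\]
so that $X_c(\bar P) \succ 0$ is equivalent to the Koopman reachability block row $\bigl[\bar P\mathcal{K}_u \ \big| \ \bar P\mathcal{K}_x\mathcal{K}_u \ \big| \ \bar P\mathcal{K}_x^2\mathcal{K}_u \ \big| \ \cdots\bigr]$ having full row rank $n$. Next, fix an arbitrary $x$ in the neighborhood of local controllability, use the eigendecomposition $\mathcal{K}_x^j = V^{-1}\Lambda^j V$, and define
\[
\bar P \;=\; \bigl[\,D(x)^T, \ \mathbf{0}_{n \times (n_L - n)}\,\bigr]\,V.
\]
The hypothesis $d_i(x) = P_i\psi_x(x)$ guarantees every column of $D(x)$ lies in the span of $\psi_x$, hence in the range of $V$, so $\bar P$ is well-defined, exactly as in Lemma 3. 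Substituting the spectral decomposition gives
\[
\bar P\,\mathcal{K}_x^j \mathcal{K}_u \;=\; [D(x)^T, \mathbf{0}]\,\Lambda^j\,V\mathcal{K}_u \;=\; D(x)^T \Lambda_n^j\, B,
\]
where $\Lambda_n$ is the upper-left $n\times n$ block of $\Lambda$ and $B$ denotes the first $n$ rows of $V\mathcal{K}_u$. Because $D(x)^T$ is invertible, the horizontal concatenation over $j$ has full row rank $n$ if and only if the reduced linear pair $(\Lambda_n, B)$ is Kalman-controllable.

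The main obstacle, and where I expect the delicate step to be, is justifying that the reduced Koopman pair $(\Lambda_n, B)$ inherits controllability from the local controllability of the nonlinear system $(\ref{eq:control-affine})$. The conceptual content is that the iterated Koopman compositions $\mathcal{K}_x^j \mathcal{K}_u$ play the role of the iterated Lie brackets that generate $\bar{\mathcal{D}}^*$; the state-inclusive structure of $\psi_x$, together with the hypothesis that each $d_i(x)$ is a linear readout $P_i \psi_x(x)$, ensures that the span of $\{\mathcal{K}_x^j \mathcal{K}_u\}_{j\geq 0}$, when projected through $[D(x)^T,\mathbf{0}]$, recovers the full-rank distribution $D^*(x)$. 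This parallels the implicit step in Lemma 3 where $\Lambda^t$ is assumed not to collapse the column rank of $L(x)$, and would complete $X_c(\bar P) \succ 0$.
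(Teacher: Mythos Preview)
Your approach is essentially the paper's own: build $\bar P$ by premultiplying $[D^*(x)^T,\ \mathbf 0]$ with the eigenvector matrix of $\mathcal K_x$ so that $\bar P\,\mathcal K_x^{\,j}\mathcal K_u$ collapses to $[D^*(x)^T,\ \mathbf 0]\,\Lambda^{j}\,(\cdot)\,\mathcal K_u$, and then invoke the full rank of $D^*(x)$ to conclude positive definiteness (the paper writes $\mathcal K^t=V\Lambda^t V^{-1}$ and takes $\bar P=[D^*(x)^T,\ \mathbf 0]\,V^{-1}$, which is the same construction up to your opposite eigendecomposition convention). The ``delicate step'' you single out---Kalman controllability of the reduced pair $(\Lambda_n,B)$---is exactly the point the paper's proof asserts directly from the full column rank of $D^*(x)$ without additional argument, so your proposal is at least as complete as the original.
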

\begin{proof}
Since the system (\ref{eq:control-affine}) is locally controllable, the matrix 
\begin{equation} D^*(x) = \begin{bmatrix} d_1(x) & \hdots & d_n(x) \end{bmatrix}  \end{equation}
 has full column rank $n$.  Define 
\begin{equation}
\bar{P} = \begin{bmatrix} D^*(x)^T &  {\bf 0_{n \times (n_L-n)}} \end{bmatrix} V^{-1}
\end{equation}
where $V$ is the matrix of right eigenvectors of the Koopman operator, i.e. 
\begin{equation}
{\cal K}^t = V\Lambda^t V^{-1}
\end{equation}
Note that $\bar{P}$ is well defined since $V$ spans the observable function space \cite{Rowley2015} and therefore since $D^*(x)$ is in the range of $\psi_x(x)$, $D^*(x)$ is in the range space of $V.$    Let $x \in \mathbb{R}^n$, then the projected controllability gramian
\begin{equation}
\begin{aligned}
X_c(\bar{P}) = \bar{P} X_c^\psi \bar{P}^T = \bar{P}\sum_{j=0}^{\infty} {\mathcal K}_x^j {\mathcal K}_u {\mathcal K}_u^T ({\mathcal K}_x^j)^T\bar{P}^T 
\end{aligned}
\end{equation}
and the transformation 
\begin{equation}
\begin{aligned}
\bar{P}\Phi_c^\psi &= \sum_{j=0}^{\infty}\bar{P} {\mathcal K}_x^j {\mathcal K}_u \\& = \sum_{j=0}^\infty \begin{bmatrix} D^*(x)^T &  {\bf 0_{n \times (n_L-n)}} \end{bmatrix} \Lambda^j V^{-1} {\cal K}_u
\end{aligned}
\end{equation}
has full row rank and therefore $X_c(\bar{P})$ is positive definite. 
\end{proof}
This lemma elucidates the relationship between the notions of local controllability and observability in nonlinear systems and positive definiteness of Koopman gramians.  

To conclude this section, we compute an approximate Koopman controllability grammian for our small example system.  
\example{Approximating Koopman Controllability Gramians} \label{example:controlled_oscillator}

Consider the  controlled dynamical system 
\begin{equation}\label{eq:strong_oscillator_with_input}
\begin{aligned}
x_1[t] &=\delta_1 x_1[t-1] + \alpha x_1[t-1]^2 - x_2[t-1]^2 +u_1[t] \\
x_2[t] &= \delta_2 x_2[t-1] + \beta x_1[t-1] + \gamma x_2[t-1]^2\\
y_1[t] &= x_1[t]^2  \\ 
y_2[t] & = x_2[t]^2
\end{aligned}
\end{equation}
where $u_1[t] = \sin(n) + \mu n$, $\mu = 0.01$ and $u_2[t] = 0.$   The response for the system with the input channel is plotted in Figure \ref{fig:controlled_oscillator}.

We construct the state observable vector $\psi_x (x_n) $, the matrix $W_h$, and the matrix  $P_x$ as before.  The only difference is that we need to estimate $K_u$ using dynamic mode decomposition and construct $\psi_u(x_n,u)$.  We write 
\begin{equation}
\psi_u(x_t,u_t) = (u_t, \sin(u_t))
\end{equation}
and the lifted version of $K_u$ is estimated accordingly using extended dynamic mode decomposition \cite{Proctor2016, Rowley2015, Mauroy2016}.   The lifted controllability grammian for the system is a 12 by 12 matrix, again we omit it for brevity.  The approximate projected controllability grammian is given as 
\begin{equation}
\left(\begin{array}{cc} 1.0 & 2.5\cdot 10^{-10}\\ 2.5\cdot 10^{-10} & 2.0\cdot 10^{-19} \end{array}\right)
\end{equation}
Notice that only one state is controllable with respect to the input $u_1$.  This can also be seen from the transformation $\Phi_c^\psi$ were 
\begin{equation}
\Phi_c^\psi = 
\left(\begin{array}{cc} 1.0 & 6.0\cdot 10^{-11}\\ 2.5\cdot 10^{-10} & -3.7\cdot 10^{-10}\\ 4.0 & -3.7\\ 2.6\cdot 10^{-10} & -3.8\cdot 10^{-10}\\ 0.8 & -0.72\\ -0.044 & 0.25\\ -0.26 & 0.54\\ 0.21 & -0.12\\ 6.2 & -6.0\\ 0.19 & -0.17\\ 9.7 & -11.0\\ 0.8 & -0.69 \end{array}\right).
\end{equation}
Note that the second row is essentially $0$, indicating that the input gain from $\psi_1(u_t) = u_t$ or $\psi_2(u_t) = \sin(u_t)$ to $x_2[t]$ is negligible.  

\begin{figure}\label{fig:hsv_koopman}
\centering
\includegraphics[width=\columnwidth]{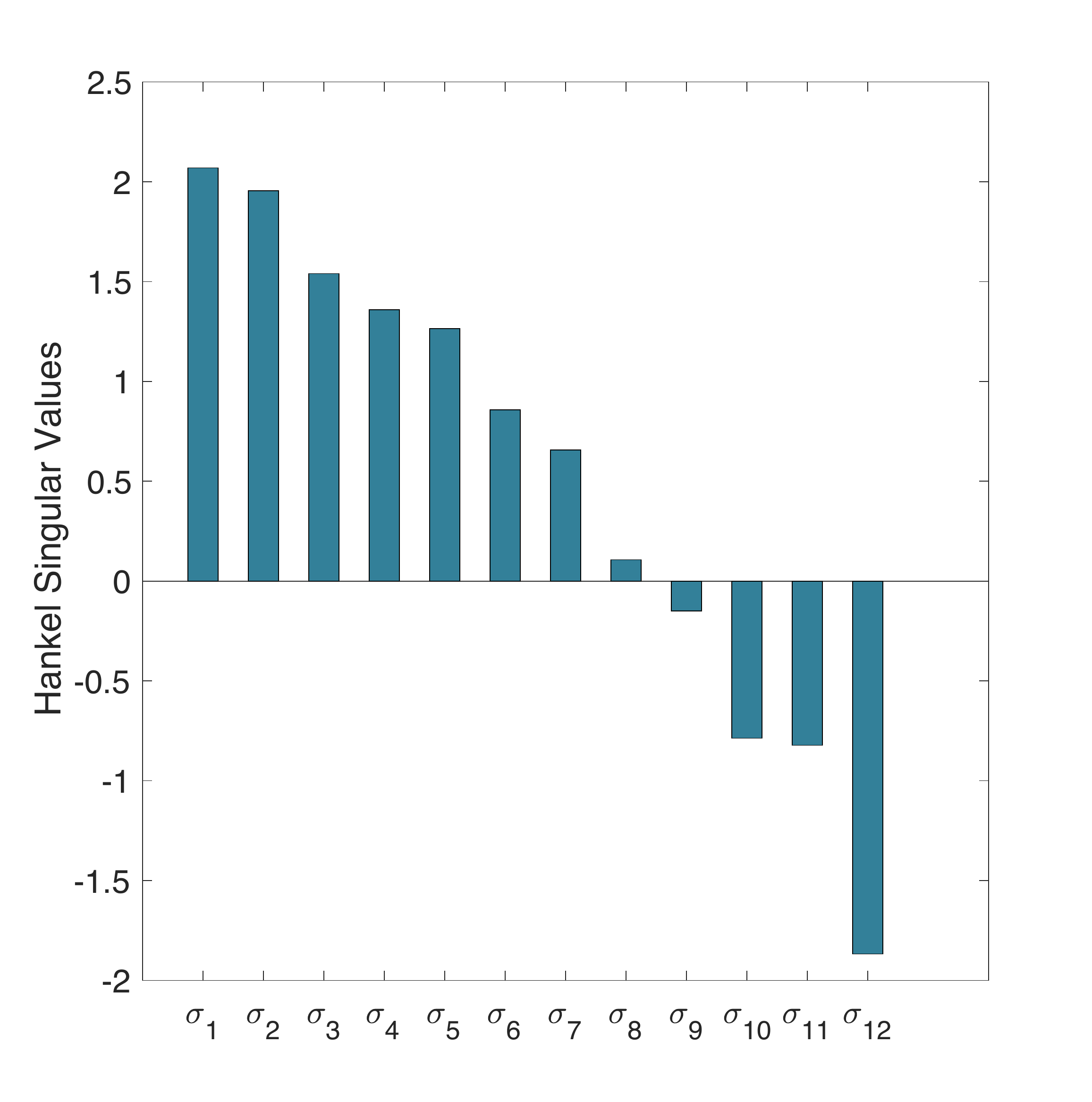}
\caption{   The Koopman-Hankel singular values (plotted on log-scale) for the input-output Koopman system from Example \ref{example:controlled_oscillator}.  Notice the separation in scale after the 9th singular value.}
\end{figure}
\section{Balanced Truncation of Input-Output Koopman Operators}\label{sec:modelreduction}

If both the generalized controllability and observability Gramians are positive definite, then we can apply the classical transformation to achieve a balanced realization \cite{Paganini2013}.  For brevity, we use $X_c$ and $X_o$ to denote the Koopman grammians $X^\psi_c$ and $X^\psi_o$.  Following the classical approach we perform a singular value decomposition on $X_c^{1/2} X_o X_c^{1/2}$  to get 
\begin{equation}
X_c^{1/2} X_oX_c^{1/2} = U\Sigma^2 U^*
\end{equation}
and from this we can define \begin{equation} \label{eq:balancetransform} T^{-1}  = X_c^{1/2} U \Sigma^{-1/2}.\end{equation}  
Define the transform $\eta = T\psi$ to obtain a balanced realization, where  \begin{equation}
 X_{o}^\eta = X_{c}^\eta = \Sigma
\end{equation} 
With balanced realizations, we can perform balanced truncation on the Koopman operator.  The input-output map we consider here is the input-observable to system output map ${\cal G}: \mathbb{R}^{m_L} \rightarrow \mathbb{R}^p$, where $\eta_u(x,u)  \in \mathbb{R}^{ m_L}$ is a vector of input-observables under the transformation $T$.   We seek a lower order approximation to $G$, given as $G_r$.  It is straightforward to see that the lifted Koopman system is a linear dynamical system.  Namely, define 
\begin{equation}
\begin{aligned}
\eta(x_{n-1}) &= A_\eta \eta_x(x_{n-1}) + B_\eta \eta_u(x_{n-1},u_{n-1})\\ 
y_n &= C_\eta \eta_x(x_{n-1})
\end{aligned}
\end{equation} 
where 
\begin{equation}
\begin{aligned}
A_\eta &= TA_\psi T^{-1} = T K_{x,\psi} T^{-1} \\
B_\eta&= TB_{\psi}  = T K_{u,\psi}\\ 
C_\eta & = C_\psi T^{-1} = W_h T^{-1} 
\end{aligned}
\end{equation}
With the system in linear form, we apply the approach of balanced truncation, first by identifying the Hankel singular values of the Hankel operator $\Gamma_{\cal G} = \Phi_o^\eta \Phi_c^\eta $ which is equal to $ \Sigma^2$ in the balanced realization.  The Hankel singular values are given by the diagonal entries of $\Sigma$ and for the $r$ dimensional projection of the balanced $n$ dimensional system, we have the famous error bound 
\begin{equation}
2(\sigma_1^t +  \hdots + \sigma_k^t) \geq ||\hat{G} - \hat{G}_r ||_\infty \geq \sigma_{r+1}.
\end{equation}
This bound is proved in  \cite{Paganini2013, Zhou1998}.

The key insight is that we now have a principled way to perform {\it input-output} model reduction on Koopman operators, where the class of systems satisfy the affine-control property.  Up to this point in time, model reduction of input-output Koopman operators was performed using low-rank approximations, with no guarantee on the input-output properties of the system.  Applying balanced truncation theory allows us to apply classical error bounds to achieve higher fidelity input-output Koopman models.
\begin{example}
We conclude with an example of input-output model reduction on our example system.  First, we consider the transformation to balance the system as defined above 
\begin{equation}\label{eq:balancetransform}
T = \Sigma^{1/2} U^* X_c^{-1/2}. 
\end{equation}
This yields a balanced realization with $\Phi^\eta_c = \Phi^{\eta}_o = \Sigma$
where the Hankel singular values are plotted in Figure 3.  We see a clear separation of scale in the singular values, which we exploit to obtain a reduced order approximation.  
We discover we can reduce the system down to 2 modes and preserve all qualitative aspects of the dynamics, with relatively small error (see Figure 4).  

This approach provides a new method for nonlinear input-output model reduction.  The precise transformation to achieve this high fidelity input-output model was not known until we derived a representation using  Koopman gramians and balanced realizations.  This allows us to identify the canonical basis under which to approximate system (\ref{eq:generalcontrol}).  

This method complements existing and recently developed approaches for nonlinear balancing.  In particular, Scherpen et al. have pioneered the use of differential balancing to obtain balanced realizations for nonlinear systems, with respect to the Frechet derivative of the Hankel operator \cite{Kawano2016, Scherpen1993, Fujimoto2005}.  Our work takes a complementary angle, examining balancing methods using the Koopman operator, to define a lifted Hankel operator of the underlying nonlinear system.  In particular, our approach provides an alternative framework for data-driven model reduction,  in scenarios where system models are only partially known or completely unknown. 
\end{example}

\begin{figure}\label{fig:balred}
\centering
\includegraphics[width=1\columnwidth]{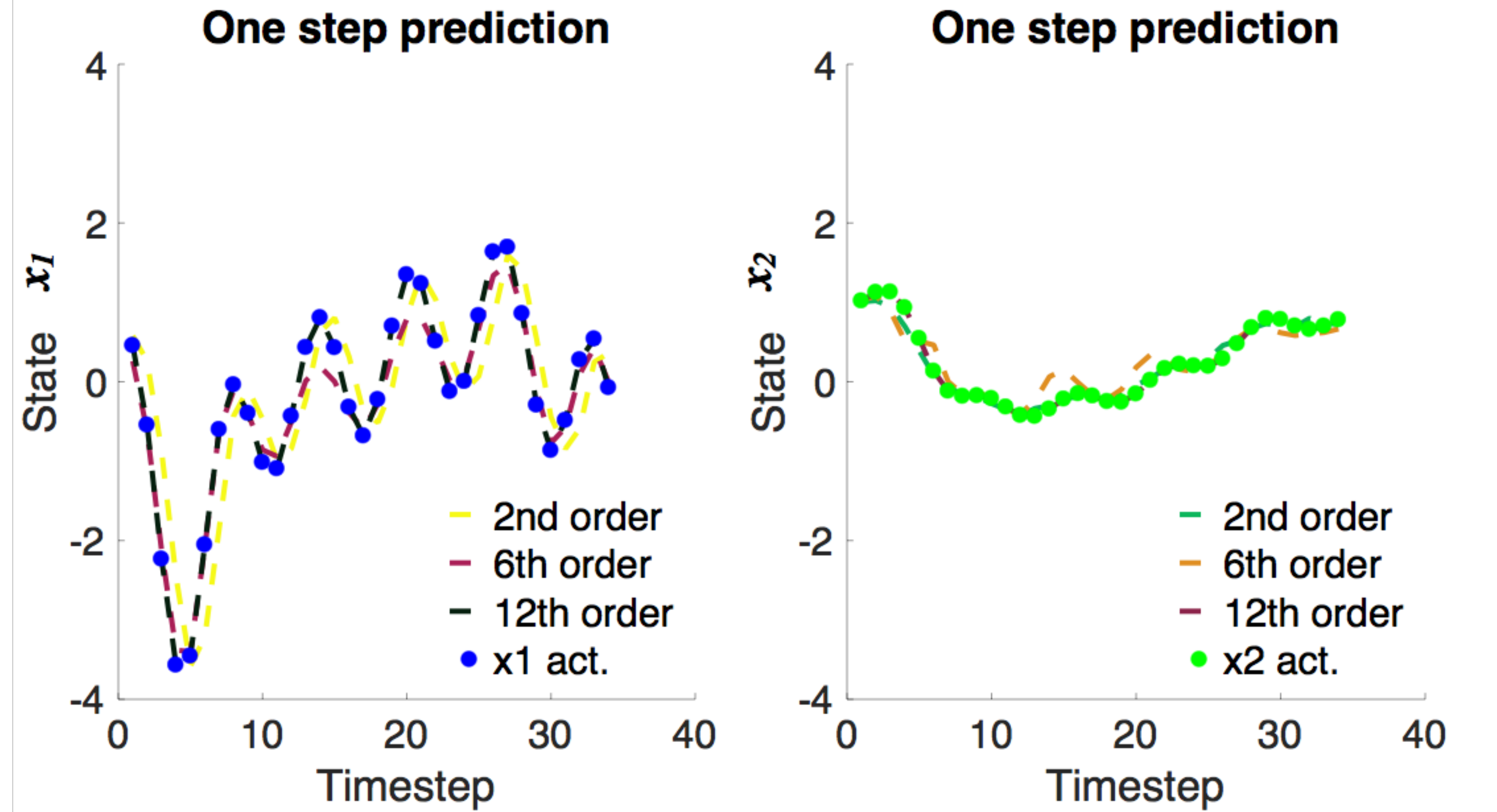}
\caption{Reduced order input-output Koopman models derived from balanced truncation.  Trajectories for the second, sixth, and 12th order (the original order) models are plotted as dashed lines against the actual (dots).  $x_1[n]$ is plotted on the left and $x_2[n]$ is plotted on the right. } 
\end{figure}

\section{Conclusion}
In this paper we have developed conceptual and mathematical definitions for Koopman gramians.  We have shown that they can be used to quantify controllability and observability, beyond binary status (e.g. controllable or not controllable) and lend insight for certain dynamical systems where linear techniques do not avail.  We showed how to construct balanced realizations on the lifted state-space model, cast it as a linear system and showed high fidelity of reduced order nonlinear Koopman models, even using approximate Koopman gramians.   

Future work will investigate how to translate these model reduction results for the Koopman operator 
back to the original underlying nonlinear dynamical system.  In addition, we will investigate extending these methods to study stochastic discrete time systems, as well as  metrics for controllability and observability in the presence of uncertainty. 

\bibliographystyle{plain}

\end{document}